\documentclass[11pt,a4paper]{article}
\usepackage[utf8]{inputenc}
\usepackage[T1]{fontenc}
\usepackage{amsmath,amssymb,amsthm,graphicx,float,enumitem,xcolor,subcaption}
\usepackage[margin=2.5cm]{geometry}
\usepackage[hidelinks]{hyperref}
\newtheorem{proposition}{Proposition}[section]

\renewcommand{\proof}{\par\noindent\textit{Proof.}\ }
\title{The influence of decay in dynamic models of gene regulatory networks}
\author{Oleksandr Pokutnyi$^{1,2}$ \and Felix Sadyrbaev$^{3,4}$\\[1ex]
\small $^1$ Kyiv-Mohyla Academy, Kyiv, Ukraine\\
\small $^2$ Institute of Mathematics, NAS of Ukraine, Kyiv, Ukraine\\
\small $^3$ Institute of Mathematics and Computer Science, University of Latvia, Riga, Latvia\\
\small $^4$ Daugavpils University, Daugavpils, Latvia}
\date{}
\begin{document}
\maketitle
\noindent\textbf{Keywords:} gene regulatory networks; dynamical systems; stability; characteristic polynomial; decay rates.
\bigskip

{\bf Abstract.} We investigate the influence of decay rates on the dynamics of three-dimensional gene regulatory networks modeled by nonlinear differential equations with logistic sigmoidal regulatory functions. Particular attention is paid to the existence, location, and stability of equilibrium points in systems involving numerous interacting parameters. An explicit analytical representation of a central equilibrium is obtained under suitable relations between the regulatory weights, thresholds, and decay coefficients, allowing its location to be determined without solving the corresponding nonlinear algebraic system numerically. General formulas for the characteristic polynomial of the linearized system are derived, and spectral conditions for instability are established for several classes of regulatory interactions. We demonstrate how variations in decay rates can change the number and stability of equilibria, transform oscillatory dynamics, and influence the emergence or suppression of complicated attractors. The possibility of chaotic behavior is investigated through eigenvalue configurations, Lyapunov exponents, and the Kaplan–Yorke dimension. Numerical examples illustrate the sensitivity of the dynamics to decay parameters, including the suppression of previously observed chaotic behavior. The results provide an analytical framework for investigating equilibrium configurations, stability transitions, and complex dynamics in parametrized gene regulatory networks.

\section{Introduction}
\label{sec:intro}

Mathematical biology has become a broad field for biologists and mathematicians looking to apply their theories. This is due to the effectiveness of mathematical methods, on the one hand, and the importance of the subject for natural sciences and practical purposes. The study of gene networks takes a special place.

Gene regulatory networks (GRNs) exist in any cell of any living organism. These networks
regulate vitally important functions, including morphogenesis and reactions to external factors. A gene regulatory network regulates gene expression — that is, when, where, and how strongly genes are turned on or off in a cell. A gene regulatory network is a set of genes, proteins, and regulatory interactions
that determine how gene expression changes over time. Direct investigation of GRNs is hard, as it is based on collecting experimental data and analyzing it. Collecting experimental data is a work-consuming and time-consuming process, and systematization, analysis, and interpretation of the obtained data are even more complicated. Mathematical modeling is an effective and promising way of making these processes easier. An effective mathematical model can be studied on its own and can give recommendations for researching real networks. One of the ways to model GRNs is by using dynamical systems. This means that gene expression levels are treated as state variables that evolve according to some rules. These rules are formulated in a language of mathematics as systems of ordinary differential equations (ODEs) of the form  
\begin{equation} \label{zero}
dx_i/dt
= f_i(x_1, x_2,.., x_n; P), 
\end{equation}
where $x_i$ stands for the expression level (mRNA/protein) of the $i$-th gene, and $f_i$ are regulation functions (they represent relations like activation, repression, degradation, noise, etc.), and $P$ is a set of parameters (binding strengths, thresholds, and degradation rates. The network structure (who regulates who) is encoded in the structure of functions $f_i.$
A gene regulatory network (GRN) can be modeled in various ways, and modeling using dynamical systems is only one of them. These models are deterministic, and they focus on average behavior, fixed points, oscillatory behaviors, and bifurcations. The nonlinearities $f_i$ are represented by sigmoidal functions, which there are many. In most problems, the Hill functions of the form $\frac{x^n}{x^n+\Theta^n}$ are applied. In the recent decade, the majority of studies use the logistic sigmoidal functions of the form $f(z)=\frac{1}{1+e^{-\mu (z-\theta)}}.$ These functions possess many convenient properties for mathematical study (\cite{Belgacem1}). These systems possess an invariant subset in a phase space. This subset traps trajectories, and further analysis can be reduced to this subset. Critical points always exist (in that subset) and can be multiple, but their number is limited. The critical points can be studied using the standard tools, such as linearization around them and considering the eigenvalues of the Jacobian $J.$ This local analysis can tell if the equilibrium state is stable, unstable, or oscillatory. One should consider bistability, even multistability, and switches. These mathematical phenomena may correspond to real processes that can be observed in genetic networks. One can meet limit cycles in dynamical systems occurring in models of GRNs. We will call such systems GRN-type systems. More attractors can exist, including more complicated periodic solutions and attractors, exhibiting chaotic behavior. The problem of finding and detecting attractors in related phase spaces is one of the most intriguing. Knowledge of attractors allows prediction of solution behavior over future time intervals. Basins of attraction are decisive for predicting future behavior of solutions. In a corresponding network, knowledge of attractors in a model can be used to provide recommendations on how to force a network to evolve in a needed direction. There are examples in the literature of using this methodology for biomedical purposes.  

The interested reader may try to become acquainted with recent advances in GRN modeling theory by reading the following articles: \cite{Alakwaa}, \cite{Polynikis}, \cite{Barbuti}, \cite{Aguirre}, \cite{Belova}, \cite{Meng}. Applications of models and various properties of GRNs were considered in \cite{Wang}, \cite{Cornelius}, \cite{Leukemia}, \cite{Ageing}, \cite{Ogorelova}. Chaotic behavior was studied in \cite{Das2000}, \cite{Das2002}, \cite{Mukherjee}, \cite{Kozlovska}. GRNs construction looking for certain properties was studied in \cite{Kozlovska2022}. Oscillatory properties of 3D models of GRNs were researched in \cite{Silvans}. Examples and samples of the work with three-dimensional (3D) systems, modeling GRNs, are presented in Section 2. This technical part shows how to construct a network
with given properties. Recent works by I. Belgacem  (\cite{Belgacem1} and more) provide multiple arguments in favor of logistic sigmoids.  Section 3 considers the general case of the 3D
system. Formulas for the computation of the characteristic polynomial for a critical point
are obtained. Their application allows consideration of several particular cases, revealing the influence of decay on solutions of the system.

\section{Main results}

In this section we consider the three dimensional GRN type system. We analyze critical points, establish non stability results for some particular cases. Since all critical points can exist only in the region $\{0<x_1<1/v_1, 0<x_2<1/v_2, 0<x_3<1/v_3\}$, and this region is trapping all incoming trajectories, the non-stability results for critical points are important for detecting attractors of more complicated structure, the stable closed trajectories included.
Consider the GRN-type system in the form
\begin{equation} \label{GRNsystem}
\left\{\begin{array}{rl}
x'_1 =  \frac{1}{1+e^{-\mu_1 (w_{11} x_1 + w_{12} x_2 + w_{13} x_3 - \theta_1)}}- v_1 x_1,    \\\\
x'_2 = \frac{1}{1+e^{-\mu_2 (w_{21} x_1 + w_{22} x_2 + w_{23} x_3 - \theta_2)}} - v_2 x_2, \\\\
x'_3 = \frac{1}{1+e^{-\mu_3 (w_{31} x_1 + w_{32} x_2 + w_{33} x_3 -
\theta_3)}}- v_3 x_3,
\end{array}\right.
\end{equation}
which can be rewritten as
$$
x' = - v_{1}x + \frac{1}{1 + e^{-\frac{\mu_{1}}{2}\left(\frac{w_{11}}{v_{1}}2v_{1}x + \frac{w_{12}}{v_{2}}2v_{2}y + \frac{w_{13}}{v_{3}}2v_{3}z - 2\theta_{1} \right)}},
$$
$$
y' = - v_{2}y + \frac{1}{1 + e^{-\frac{\mu_{2}}{2}\left(\frac{w_{21}}{v_{1}}2v_{1}x + \frac{w_{22}}{v_{2}}2v_{2}y + \frac{w_{23}}{v_{3}}2v_{3}z - 2\theta_{2} \right)}},
$$
$$
z' = -v_{3}z +  \frac{1}{1 + e^{-\frac{\mu_{3}}{2}\left(\frac{w_{31}}{v_{1}}2v_{1}x + \frac{w_{32}}{v_{2}}2v_{2}y + \frac{w_{33}}{v_{3}}2v_{3}z - 2\theta_{3} \right)}}.
$$

The system of equations for the critical points can be rewritten in the following form
$$
v_{1}x = \frac{1}{1 + e^{-\frac{\mu_{1}}{2}\left(\frac{w_{11}}{v_{1}}2v_{1}x + \frac{w_{12}}{v_{2}}2v_{2}y + \frac{w_{13}}{v_{3}}2v_{3}z - 2\theta_{1} \right)}},
$$
$$
v_{2}y = \frac{1}{1 + e^{-\frac{\mu_{2}}{2}\left(\frac{w_{21}}{v_{1}}2v_{1}x + \frac{w_{22}}{v_{2}}2v_{2}y + \frac{w_{23}}{v_{3}}2v_{3}z - 2\theta_{2} \right)}},
$$
$$
v_{3}z = \frac{1}{1 + e^{-\frac{\mu_{3}}{2}\left(\frac{w_{31}}{v_{1}}2v_{1}x + \frac{w_{32}}{v_{2}}2v_{2}y + \frac{w_{33}}{v_{3}}2v_{3}z - 2\theta_{3} \right)}}
$$
If we have that $v_{1}x = 0.5$, $v_{2}y = 0.5, v_{3}z = 0.5$ then we obtain the following conditions on  parameters $\theta_{i}$: 
\begin{equation} \label{thetai}
\frac{w_{11}}{v_{1}} + \frac{w_{12}}{v_{2}} + \frac{w_{13}}{v_{3}} = 2\theta_{1},
$$
$$
\frac{w_{21}}{v_{1}} + \frac{w_{22}}{v_{2}} + \frac{w_{23}}{v_{3}} = 2\theta_{2}
\end{equation}
$$
\frac{w_{31}}{v_{1}} + \frac{w_{32}}{v_{2}} + \frac{w_{33}}{v_{3}} = 2\theta_{3}.
$$
In such a way $x = \frac{0.5}{v_{1}}, y = \frac{0.5}{v_{2}}, z = \frac{0.5}{v_{3}}$ is a solution of our system. 
If, for example $v_{1} = v_{2} = v_{3} = v$ then $x = y = z = \frac{0.5}{v}$ and the solvability conditions  take form
$$
w_{11} + w_{21} + w_{31} = 2\theta_{1}v
$$
$$
w_{21} + w_{22} + w_{32} = 2\theta_{2}v
$$
$$
w_{31} + w_{32} + w_{33} = 2\theta_{3}v.
$$
In the case when $w_{11} = 1, w_{12} = 0, w_{13} = -1, w_{21} = -1, w_{22} = 1, w_{23} = 0, w_{31} = 0, w_{32} = -1, w_{33} = 1 $ we have that $\theta_{1} + \theta_{2} + \theta_{3} = 0$, 
$$
v_{3} = \frac{v_{1}}{1 - 2\theta_{1}v_{2}}, \quad v_{2} = \frac{v_{1}}{1 + 2\theta_{2}v_{1}},
$$
and corresponding solutions we can represent in the following form
$$
x = \frac{0.5}{v_{1}}, \quad y = \frac{0.5(1 + 2\theta_{2}v_{1})}{v_{1}}, \quad z = \frac{0.5(1 - 2\theta_{1}v_{1})}{v_{1}} 
$$
For example $x = 0.5, y = 0.6, z = 0.3, \theta_{1} = 0.2, \theta_{2} = 0.1, \theta_{3} = -0.3,$
for any $v_{1}, v_{2}, v_{3}, \mu_{1}, \mu_{2}, \mu_{3}$. In this case we have that
$$
P_{1}(x, y, z) = \frac{1}{1 + e^{-\mu_{1}(x - z - 0.2)}} - v_{1}x,
$$
$$
P_{2}(x, y, z) = \frac{1}{1 + e^{-\mu_{2}( -x + y - 0.1)}} - v_{2}y,
$$
$$
P_{3}(x, y, z) = \frac{1}{1 + e^{-\mu_{3}(-y + z + 0.3)}} - v_{3}z.
$$
Then 
$$
P_{1x}(x_{0}, y_{0}, z_{0}) = \mu_{1}w_{11}(v_{1}x_{0} - v_{1}^{2}x_{0}^{2}) - v_{1}, P_{1y}'(x_{0}, y_{0}, z_{0}) = \mu_{1}w_{12}v_{1}x_{0}(1 - v_{1}x_{0}),
-$$
$$
P_{1z}'(x_{0}, y_{0}, z_{0}) = \mu_{1}w_{13}(v_{1}x_{0} - v_{1}^{2}x_{0}^{2}),
$$
$$
P_{2x}'(x_{0}, y_{0}, z_{0}) = \mu_{2}w_{21}(v_{2}y_{0} - v_{2}^{2}y_{0}^{2}), ~~ P_{2y}'(x_{0}, y_{0}, z_{0}) = \mu_{2}w_{22}(v_{2}y_{0} - v_{2}^{2}y_{0}^{2}) - v_{2}, ~~ 
$$
$$
P_{2z}'(x_{0}, y_{0}, z_{0}) = \mu_{3}w_{23}(v_{2}y_{0} - v_{2}^{2}y_{0}^{2}),
$$
$$
P_{3x}'(x_{0}, y_{0}, z_{0}) = \mu_{3}w_{31}v_{3}z_{0}(1 - v_{3}z_{0}), P_{3y}'(x_{0}, y_{0}, z_{0}) = \mu_{3}w_{32}(v_{3}z_{0} - v_{3}^{2}z_{0}^{2}), 
$$
$$
P_{3z}'(x_{0}, y_{0}, z_{0}) =  \mu_{3}w_{33}(v_{3}z_{0} - v_{3}^{2}z_{0}^{2}) - v_{3}.
$$
Linearized system has the following form
$$
u_{1}' = (0.25 \mu_{1}w_{11} - v_{1})u_{1} + 0.25\mu_{1}w_{12}u_{2} + 0.25 \mu_{1}w_{13}u_{3},
$$
$$
u_{2}' = 0.25w_{21}\mu_{2}u_{1} + (0.25\mu_{2}w_{22} - v_{2})u_{2} + 0.25\mu_{3}w_{23}u_{3},
$$
$$
u_{3}' = 0.25\mu_{3}w_{31}u_{1} + 0.25\mu_{3}w_{32}u_{2} + (0.25\mu_{3}w_{33} - v_{3})u_{3}.
$$
Consider for the simplicity 
$$
\mu_{1} = \mu_{2} = \mu_{3} = 1, v_{1} = 1, v_{2} = \frac{5}{6}, v_{3} = \frac{5}{3},
$$
corresponding weights have the form
 $w_{11} = 1, w_{12} = 0, w_{13} = -1, w_{21} = -1, w_{22} = 1, w_{23} = 0, w_{31} = 0, w_{32} = -1, w_{33} = 1 $. Then the characteristic determinant has the form
$$
\left| \begin{array}{ccc} 0.25 - 1 - \lambda & 0 & -0.25  \\ 
-0.25 & -\frac{7}{12} - \lambda  & 0 \\
0 & -0.25 & -\frac{17}{12} - \lambda
\end{array}\right| = 0.
$$
In this case we obtain the following characteristic equation 
$$
p(\lambda) = -(\frac{3}{4} + \lambda)(\lambda + \frac{17}{12})(\lambda - \frac{19}{24}) - \frac{1}{64} = 0
$$
or
$$
\lambda^3 + \frac{11}{8}\lambda^2 - \frac{47}{72}\lambda - \frac{317}{384} = 0
$$
Using the Hurwitz criterion for the matrix
$$
\left( \begin{array}{ccccc} \frac{11}{8} & -\frac{317}{384} & 0 \\
1 & -\frac{47}{72} & 0 \\
0 & \frac{11}{8} & -\frac{317}{384}
\end{array}\right)
$$
we can obtain
$$
\delta_{1} = \frac{11}{8} > 0,
$$
$$
\delta_{2} < 0
$$
and corresponding instability ($\lambda_{1} > 0, \lambda_{2} \neq \lambda_{3} < 0$).

In the general case we obtain the following system (corresponding linearization) $\mu_{i} = 4$:
$$
u_{1}' = (w_{11} - v_{1})u_{1} + w_{12}u_{2} + w_{13}u_{3},
$$
$$
u_{2}' = w_{21} u_{1} + (w_{22} - v_{2})u_{2} + w_{23}u_{3},
$$
$$
u_{3}' = w_{31}u_{1} + w_{32}u_{2} + (w_{33} - v_{3})u_{3}.
$$
with characteristic equation
$$
-(\lambda - w_{11} + v_{1})(\lambda - w_{22} + v_{2})(\lambda - w_{33} + v_{3}) + w_{12}w_{23}w_{31} + w_{13}w_{21}w_{32} - 
$$
$$
- w_{13}w_{31}(w_{22} - v_{2} - \lambda) - w_{12}w_{21}(w_{33} - v_{3} - \lambda) - w_{23}w_{32}(w_{11} - v_{1} - \lambda) = 0
$$
Finally, we have that
$$
- \lambda^{3} + A \lambda^{2} + B \lambda + C = 0,
$$
where
$$
A = w_{11} + w_{22} + w_{33} - v_{1} - v_{2} - v_{3},
$$
$$
B = -\left((v_{1} - w_{11})(v_{3} - w_{33}) + (v_{2} - w_{22})(v_{3} - w_{33}) + (v_{1} - w_{11})(v_{2} - w_{22}) \right) +
$$
$$
+ w_{13}w_{31} + w_{12}w_{21} + w_{23}w_{32}, 
$$
$$
C=  - (v_{1} - w_{11})(v_{2} - w_{22})(v_{3} - w_{33}) +
$$
$$
+ w_{12}w_{23}w_{31} + w_{13}w_{21}w_{32} - w_{13}w_{31}w_{22} - w_{12}w_{21}w_{33} - w_{23}w_{32}w_{11}
$$
or in the form
\begin{align*}
&- \lambda^{3}
+\bigl(w_{11}+w_{22}+w_{33}-v_{1}-v_{2}-v_{3}\bigr)\lambda^{2}\\[4pt]
&\quad +\bigl(
- v_{1}v_{2} - v_{1}v_{3} + v_{1}w_{22} + v_{1}w_{33}
- v_{2}v_{3} + v_{2}w_{11} + v_{2}w_{33}\\[-2pt]
&\qquad\qquad\quad
+ v_{3}w_{11} + v_{3}w_{22}
- w_{11}w_{22} - w_{11}w_{33} + w_{12}w_{21} + w_{13}w_{31}
- w_{22}w_{33} + w_{23}w_{32}
\bigr)\lambda\\[6pt]
&\quad +\bigl(
- v_{1}v_{2}v_{3} + v_{1}v_{2}w_{33} + v_{1}v_{3}w_{22}
- v_{1}w_{22}w_{33} + v_{1}w_{23}w_{32}\\[-2pt]
&\qquad\qquad\quad
+ v_{2}v_{3}w_{11} - v_{2}w_{11}w_{33} + v_{2}w_{13}w_{31}
- v_{3}w_{11}w_{22} + v_{3}w_{12}w_{21}\\[-2pt]
&\qquad\qquad\quad
+ w_{11}w_{22}w_{33} - w_{11}w_{23}w_{32} - w_{12}w_{21}w_{33}
+ w_{12}w_{23}w_{31} + w_{13}w_{21}w_{32} - w_{13}w_{22}w_{31}
\bigr)
\;=\;0.
\end{align*}

In the case when $w_{11} = w_{22} = w_{33} = 1, w_{13} = w_{21} = w_{32} = -1, w_{12} = 0, w_{23} = 0, w_{31} = 0$ we have that 
\begin{equation}
\begin{aligned}
-\lambda^3 &+ \bigl(3-(v_1+v_2+v_3)\bigr)\lambda^2\\
&+\bigl(-v_1v_2-v_1v_3-v_2v_3+2(v_1+v_2+v_3)-3\bigr)\lambda\\
&-v_1v_2v_3+v_1v_2+v_1v_3+v_2v_3-(v_1+v_2+v_3)=0.
\end{aligned}
\end{equation}

In the case when $w_{11} = w_{22} = w_{33} = 0, w_{12} = w_{13} = w_{21} = w_{23} = w_{31} = w_{32} = 1$ we have that
\begin{equation}
-\lambda^3 - (v_1 + v_2 + v_3) \, \lambda^2 + \bigl(3 - (v_1 v_2 + v_1 v_3 + v_2 v_3)\bigr) \, \lambda + \bigl(2 + v_1+v_2+v_3-v_1 v_2 v_3\bigr) = 0.
\end{equation}

In the case when $w_{11} = w_{22} = w_{33} = 0, w_{12} = w_{13} = w_{21} = w_{23} = w_{31} = w_{32} = -1$ 
we have that \begin{equation}
\lambda^3
+ (v_1+v_2+v_3)\,\lambda^2
+ (v_1 v_2 + v_1 v_3 + v_2 v_3 - 3)\,\lambda
+ (v_1 v_2 v_3 - (v_1+v_2+v_3) + 2)
= 0
\end{equation}

\section*{Conditions for instability}

Let 
\[
S = v_1+v_2+v_3, \quad P = v_1v_2 + v_1v_3 + v_2v_3, \quad Q = v_1v_2v_3
\]
for convenience. For each of the three characteristic polynomials, the conditions for having one positive real root and two complex conjugate roots with negative real parts are:

\subsection*{Case 1: $w_{11} = w_{22} = w_{33} = 1$, $w_{13} = w_{21} = w_{32} = -1$, $w_{12} = w_{23} = w_{31} = 0$}
The polynomial:
\[
-\lambda^3 + (3-S)\lambda^2 + (-P + 2S - 3)\lambda + (-Q + P - S) = 0
\]
Standard form (multiply by $-1$):
\[
\lambda^3 + (S-3)\lambda^2 + (P-2S+3)\lambda + (Q-P+S) = 0
\]

Conditions:
$$ D_1 = 18 a_2 a_1 a_0 - 4 a_2^3 a_0 + a_2^2 a_1^2 - 4 a_1^3 - 27 a_0^2 < 0 ~(\text{two complex roots}),
$$
$$
\lambda_1 > 3-S (\text{real part of complex roots negative}),
a_0 = Q - P + S < 0  (\text{product condition})
$$
where $a_2 = S-3$, $a_1 = P-2S+3$, $a_0 = Q-P+S$.

\subsection*{Case 2: $w_{11} = w_{22} = w_{33} = 0$, $w_{12} = w_{13} = w_{21} = w_{23} = w_{31} = w_{32} = 1$}
The polynomial:
\[
-\lambda^3 - S \lambda^2 + (3-P)\lambda + (2+S-Q) = 0
\]
Standard form (multiply by $-1$):
\[
\lambda^3 + S \lambda^2 + (P-3)\lambda + (Q-S-2) = 0
\]

Conditions:
\begin{equation}
\begin{cases}
D_2 = 18 a_2 a_1 a_0 - 4 a_2^3 a_0 + a_2^2 a_1^2 - 4 a_1^3 - 27 a_0^2 < 0,\\[1mm]
\lambda_1 >-a_2 = -S,\\[1mm]
a_0 = Q-S-2 < 0
\end{cases}
\end{equation}
where $a_2 = S$, $a_1 = P-3$, $a_0 = Q-S-2$.

\subsection*{Case 3: $w_{11} = w_{22} = w_{33} = 0$, $w_{12} = w_{13} = w_{21} = w_{23} = w_{31} = w_{32} = -1$}
The polynomial:
\[
\lambda^3 + S \lambda^2 + (P-3)\lambda + (Q-S+2) = 0
\]

Conditions:
\begin{equation}
\begin{cases}
D_3 = 18 a_2 a_1 a_0 - 4 a_2^3 a_0 + a_2^2 a_1^2 - 4 a_1^3 - 27 a_0^2 < 0,\\[1mm]
\lambda_1 > -a_2 = -S,\\[1mm]
a_0 = Q-S+2 < 0
\end{cases}
\end{equation}
where $a_2 = S$, $a_1 = P-3$, $a_0 = Q-S+2$.

Let 
\[
S = v_1+v_2+v_3, \quad P = v_1v_2 + v_1v_3 + v_2v_3, \quad Q = v_1v_2v_3.
\]

For a cubic polynomial in standard form
\[
\lambda^3 + a_2 \lambda^2 + a_1 \lambda + a_0 = 0,
\]
the positive real root $\lambda_1$ (if it exists and the other two roots are complex with negative real part) can be computed using the Cardano formula. Define the depressed cubic
\[
\lambda = y - \frac{a_2}{3}, \quad y^3 + p y + q = 0,
\]
with
\[
p = a_1 - \frac{a_2^2}{3}, \quad q = \frac{2 a_2^3}{27} - \frac{a_2 a_1}{3} + a_0.
\]

Then
\[
\lambda_1 = \sqrt[3]{-\frac{q}{2} + \sqrt{\frac{q^2}{4} + \frac{p^3}{27}}} 
          + \sqrt[3]{-\frac{q}{2} - \sqrt{\frac{q^2}{4} + \frac{p^3}{27}}} 
          - \frac{a_2}{3}.
\]

\subsection*{Case 1: $w_{11}=w_{22}=w_{33}=1$, $w_{13}=w_{21}=w_{32}=-1$, $w_{12}=w_{23}=w_{31}=0$}
\[
a_2 = S-3, \quad a_1 = P-2S+3, \quad a_0 = Q-P+S,
\]
\[
p_1 = a_1 - \frac{a_2^2}{3}, \quad q_1 = \frac{2 a_2^3}{27} - \frac{a_2 a_1}{3} + a_0,
\]
\[
\lambda_1 = \sqrt[3]{-\frac{q_1}{2} + \sqrt{\frac{q_1^2}{4} + \frac{p_1^3}{27}}} 
          + \sqrt[3]{-\frac{q_1}{2} - \sqrt{\frac{q_1^2}{4} + \frac{p_1^3}{27}}} 
          - \frac{a_2}{3}.
\]

\subsection*{Case 2: $w_{11}=w_{22}=w_{33}=0$, $w_{12}=w_{13}=w_{21}=w_{23}=w_{31}=w_{32}=1$}
\[
a_2 = S, \quad a_1 = P-3, \quad a_0 = Q-S-2,
\]
\[
p_2 = a_1 - \frac{a_2^2}{3}, \quad q_2 = \frac{2 a_2^3}{27} - \frac{a_2 a_1}{3} + a_0,
\]
\[
\lambda_1 = \sqrt[3]{-\frac{q_2}{2} + \sqrt{\frac{q_2^2}{4} + \frac{p_2^3}{27}}} 
          + \sqrt[3]{-\frac{q_2}{2} - \sqrt{\frac{q_2^2}{4} + \frac{p_2^3}{27}}} 
          - \frac{a_2}{3}.
\]

\subsection*{Case 3: $w_{11}=w_{22}=w_{33}=0$, $w_{12}=w_{13}=w_{21}=w_{23}=w_{31}=w_{32}=-1$}
\[
a_2 = S, \quad a_1 = P-3, \quad a_0 = Q-S+2,
\]
\[
p_3 = a_1 - \frac{a_2^2}{3}, \quad q_3 = \frac{2 a_2^3}{27} - \frac{a_2 a_1}{3} + a_0,
\]
\[
\lambda_1 = \sqrt[3]{-\frac{q_3}{2} + \sqrt{\frac{q_3^2}{4} + \frac{p_3^3}{27}}} 
          + \sqrt[3]{-\frac{q_3}{2} - \sqrt{\frac{q_3^2}{4} + \frac{p_3^3}{27}}} 
          - \frac{a_2}{3}.
\]

\section*{Examples of characteristic polynomials and roots}

Here we provide explicit examples of parameters $v_1, v_2, v_3$ for each case, illustrating both complex-conjugate and real eigenvalue configurations, and the instability of the corresponding equilibria.

We set $\mu_i=4,$ $\theta_i=0.5\left(w_{i1}/v_1+w_{i2}/v_2+w_{i3}/v_3\right)$ for $i=1,2,3.$

\subsection*{Case 1: $w_{11}=w_{22}=w_{33}=1$, $w_{13}=w_{21}=w_{32}=-1$, $w_{12}=w_{23}=w_{31}=0$}

\textbf{Example 1:} $v_1=v_2=v_3=0.5$

\[
\lambda^3 - 1.5 \lambda^2 + 0.75 \lambda + 0.875 = 0
\]

 \[
\text{Roots: } \lambda_1 =-0.5, \quad \lambda_{2,3} \approx 1 \pm 0.866\, i
\]

The central critical point  $(1,1,1)$ is not attractive. Instead the stable limit cycle is an attractor. A set of twenty solutions tending to the limit cycle is shown in Fig. \ref{fig:one}.
\begin{figure}[h!] 
    \centering    \includegraphics[width=0.4\textwidth]{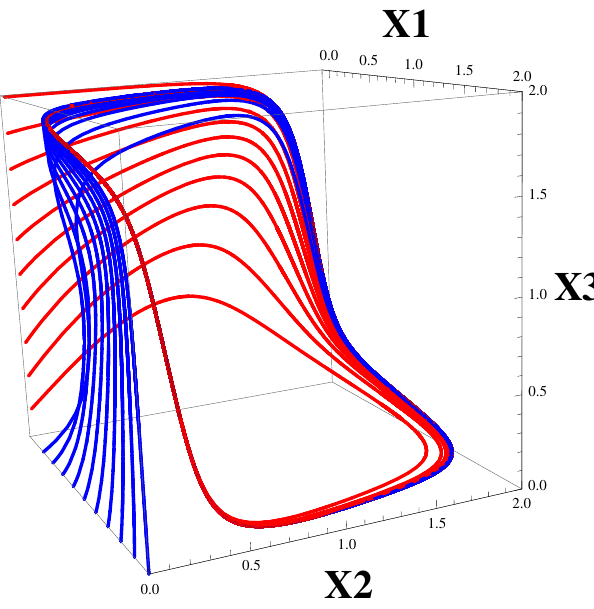}
    \caption{The phase space for Example 1.}
    \label{fig:one}
\end{figure}

\vskip0.3cm
\noindent \textbf{Example 2:} $v_1=0.5, v_2=1, v_3=0.8$
\[
\det(\lambda I-J)=(\lambda-0.5)\lambda(\lambda-0.2)+1
=\lambda^3-0.7\lambda^2+0.1\lambda+1=0.
\]
\[
\text{Roots: }\lambda_1\approx-0.787097,\qquad
\lambda_{2,3}\approx0.743549\pm0.847128\,i.
\]

\subsection*{Case 2: $w_{11}=w_{22}=w_{33}=0$, $w_{12}=w_{13}=w_{21}=w_{23}=w_{31}=w_{32}=1$}

\[
v_1=0.5, v_2=1, v_3=1
\]

\[
\lambda^3 + 2.5\lambda^2 - \lambda -4 =0
\]

\[
\text{Roots: } \lambda_1 \approx 1.186141, \; \lambda_{2} \approx -1.686141, \; \lambda_{3} = -2
\]

---

\subsection*{Case 3: $w_{11}=w_{22}=w_{33}=0$, $w_{12}=w_{13}=w_{21}=w_{23}=w_{31}=w_{32}=-1$}

\[
v_1=1, v_2=0.5, v_3=0.5
\]

\[
\lambda^3 + 2 \lambda^2 -1.75 \lambda +0.25 =0
\]

\[
\text{Roots:} \lambda_1 = 0.5, \quad \lambda_2 \approx 0.186141, \quad \lambda_3 \approx -2.686141
\]

\subsection*{Summary Table}
\vspace{0.5em}

\begin{table}[H]
\centering
\caption{Summary Table}
\label{Table 1}

\begin{tabular}{|c|c|c|c|}
\hline
Case & $v_1,v_2,v_3$ & Polynomial & Roots \\
\hline
1 & 0.5,0.5,0.5 & $\lambda^3 -1.5 \lambda^2 +0.75\lambda +0.875=0$ & $-0.5, 1\pm 0.866 i$ \\
1 & 0.5,1,0.8 & $\lambda^3 -0.7\lambda^2 +0.1\lambda +1=0$ & $-0.787,\;0.744\pm0.847i$ \\
2 & 0.5,1,1 & $\lambda^3 +2.5\lambda^2 -\lambda -4=0$ & $1.186,\;-1.686,\;-2$ \\
3 & 1,0.5,0.5 & $\lambda^3 +2\lambda^2 -1.75\lambda +0.25=0$ & $0.5,\;0.186,\;-2.686$ \\
\hline
\end{tabular}

\vspace{0.5em}
{\normalsize Examples of characteristic polynomials and roots for three cases.}
\end{table}

\section{General system}

Consider the general form of a 3D GRN-type system

\begin{equation} \label{main6}
\left\{\begin{array}{rl}
x'_1 =  \frac{1}{1+e^{-\mu_1 (w_{11} x_1 + w_{12} x_2 + w_{13} x_3 - \theta_1)}}- v_1 x_1,    \\\\
x'_2 = \frac{1}{1+e^{-\mu_2 (w_{21} x_1 + w_{22} x_2 + w_{23} x_3 - \theta_2)}} - v_2 x_2, \\\\
x'_3 = \frac{1}{1+e^{-\mu_3 (w_{31} x_1 + w_{32} x_2 + w_{33} x_3 -
\theta_3)}}- v_3 x_3.
\end{array}\right.
\end{equation}

This system can have attractors of any kind. For the equilibria (critical points) the standard local analysis can be performed.

Critical points are to be obtained from the relations
\begin{equation} \label{maincrpoint}
\left\{\begin{array}{rl}
v_1 x_1 =  \frac{1}{1+e^{-\mu_1 (w_{11} x_1 + w_{12} x_2 + w_{13} x_3 - \theta_1)}},    \\\\
v_2 x_2 = \frac{1}{1+e^{-\mu_2 (w_{21} x_1 + w_{22} x_2 + w_{23} x_3 - \theta_2)}}, \\\\
v_3 x_3 = \frac{1}{1+e^{-\mu_3 (w_{31} x + w_{32} x_2 + w_{33} x_3 -
\theta_3)}}.
\end{array}\right.
\end{equation}
\begin{proposition}  \label{prop1}
The point $(0.5/v_1, 0.5/v_2, 0.5/v_3 )$ is central in the set $Q_3=\{0<x_1<1/v_1, \: 0<x_2<1/v_2, \: 0<x_3<1/v_3\}.$ It is critical point for the system (\ref{main6}) if
\begin{itemize} [label=\textbullet]
\item $\theta_1=w_{11} 0.5/v_1+ w_{12} 0.5/v_2 + w_{13} 0.5/v_3$
\item $\theta_2=w_{21} 0.5/v_1+ w_{22} 0.5/v_2 + w_{23} 0.5/v_3$
\item $\theta_3=w_{31} 0.5/v_1+ w_{32} 0.5/v_2 + w_{33} 0.5/v_3$
\end{itemize}
\end{proposition}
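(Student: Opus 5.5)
The proof splits into two independent verifications: a geometric one for ``central'' and a substitution check for ``critical''.

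\emph{Centrality.} Here ``central'' means the point is the centre of symmetry of the box $Q_3$. Each coordinate interval $(0,1/v_i)$ has midpoint $\tfrac12\cdot\tfrac1{v_i}=0.5/v_i$. So $(0.5/v_1,0.5/v_2,0.5/v_3)$ lies in $Q_3$ and is its centre, i.e.\ the intersection of its diagonals. Since all $v_i>0$, nothing more is needed.

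\emph{Criticality.} I will substitute $x_i^*=0.5/v_i$ into the system (\ref{maincrpoint}) and check each of the three equations in turn.

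\begin{itemize}[label=\textbullet]
\item \textbf{Left-hand sides.} For every $i$,
\[
v_i x_i^* = v_i\cdot\frac{0.5}{v_i} = 0.5 .
\]
\item \textbf{Arguments of the exponentials.} For each $i=1,2,3$ the argument is
\[
-\mu_i\bigl(w_{i1}x_1^*+w_{i2}x_2^*+w_{i3}x_3^*-\theta_i\bigr)
=-\mu_i\Bigl(w_{i1}\tfrac{0.5}{v_1}+w_{i2}\tfrac{0.5}{v_2}+w_{i3}\tfrac{0.5}{v_3}-\theta_i\Bigr).
\]
The $i$-th bulleted hypothesis says exactly that the bracket vanishes, so the exponential equals $e^0=1$.
\item \textbf{Right-hand sides.} Each right-hand side is therefore $1/(1+1)=0.5$.
\end{itemize}

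Both sides equal $0.5$ in all three equations, so the point satisfies (\ref{maincrpoint}). Equivalently, every component of the vector field in (\ref{main6}) vanishes there, and the point is a critical point. This works for arbitrary $\mu_i$. The hypotheses coincide with the conditions (\ref{thetai}) derived earlier, up to the factor $2$.

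\emph{Remarks.} There is no real obstacle: the argument is a direct substitution, and the only care needed is index bookkeeping. One should note that in the third equation of (\ref{maincrpoint}) the symbol $x$ is a typo for $x_1$. Optionally, I would add that the converse holds when $\mu_i\neq0$. The logistic function equals $0.5$ only at argument $0$, so if $(0.5/v_1,0.5/v_2,0.5/v_3)$ is critical, then the three $\theta$-relations are forced.
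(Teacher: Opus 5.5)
Your proposal is correct and is essentially the paper's proof, which consists only of the direct check of the critical-point equations (\ref{maincrpoint}) at $x_i^*=0.5/v_i$. Your additional remarks (the midpoint reading of ``central,'' the typo $x$ for $x_1$, and the converse when $\mu_i\neq 0$) are accurate and make explicit what the paper leaves implicit.
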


\proof By direct checking (\ref{maincrpoint}).

\vskip0.3cm
There is at least one critical point, but there may be multiple, and all of them are in the set $Q_3.$

Critical points are subject to analysis. The~standard analysis of a critical point $x^*=(x_1^*,\ldots, x_n^*)$ involves the following~procedure. 

Linearization at a critical point $(x^*_1, \ldots, x^*_n)$ gives
\begin{equation} \label{mainlinearization}
\left\{\begin{array}{lrrl}
      \displaystyle u_1' = - v_1 u_1 + \mu_1 w_{11} g_1 u_1 + \mu_1 w_{12} g_1 u_2 +\mu_1 w_{13} g_1 u_3,\\
      \displaystyle u_2' = - v_2 u_2 + \mu_2 w_{21} g_2 u_1 + \mu_2 w_{22} g_2 u_2 +\mu_2 w_{23} g_2 u_3,\\
            \displaystyle u_3' = - v_3 u_3 + \mu_3 w_{31} g_3 u_1 + \mu_3 w_{32} g_3 u_2 + \mu_3 w_{33} g_3 u_3,\\
    \end{array} \right.
\end{equation}
where
\begin{equation} \label{maing1}
    g_1 = \frac{ e^{-\mu_1(w_{11} x^*_1 + w_{12} x^*_2 +  w_{13} x^*_3 - \theta_1)}}{[1 + e^{-\mu_1(w_{11} x^*_1 + w_{12} x^*_2 + w_{13} x^*_3 -
    \theta_1)}]^2},
\end{equation}
\begin{equation} \label{maing2}
    g_2 = \frac{ e^{-\mu_2(w_{21} x^*_1 + w_{22} x^*_2 +w_{23} x^*_3 - \theta_2)}}{[1 + e^{-\mu_2(w_{21} x^*_1 + w_{22} x^*_2 +  w_{23} x^*_3 -
    \theta_2)}]^2},
\end{equation}
\begin{equation} \label{maing3}
    g_3 = \frac{ e^{-\mu_3(w_{31} x^*_1 + w_{32} x^*_2 + w_{33} x^*_3 - \theta_3)}}{[1 + e^{-\mu_3(w_{31} x^*_1 + w_{32} x^*_2 + w_{33} x^*_3 - \theta_3)}]^2}.
\end{equation}
It follows from (\ref{maincrpoint}) that
\begin{equation} \label{prosto}
\left\{
\begin{array}{rl}
\displaystyle e^{-\mu_1(w_{11} x^*_1 + w_{12} x^*_2 +  w_{13} x^*_3 - \theta_1} =  \frac{1}{v_1 x^*_1}-1, \\\\
\displaystyle e^{-\mu_2(w_{21} x^*_1 + w_{22} x^*_2 +w_{23} x^*_3 - \theta_2} = \frac{1}{v_2 x^*_2}-1, \\\\
\displaystyle e^{-\mu_3(w_{31} x^*_1 + w_{32} x^*_2 + w_{33} x^*_3 - \theta_3} = \frac{1}{v_3 x^*_3}-1.
\end{array}
\right.
\end{equation}
Then $g_i=(1-v_i x^*_i)v_i x^*_i$ for $i=1,2,3,$ and the coefficient matrix $A$ of the linear system (\ref{mainlinearization}) can be written in the form
\begin{equation} \label{A}
A=\left(
\begin{matrix}
\mu_{1} w_{11} (1-v_1 x_{1}^{\ast}) v_1 x_{1}^{\ast} - v_1 
    & \mu_{1} w_{12} (1-v_1 x_{1}^{\ast}) v_1 x_{1}^{\ast} 
    & \mu_{1} w_{13} (1-v_1 x_{1}^{\ast}) v_1 x_{1}^{\ast} \\[6pt]
\mu_{2} w_{21} (1-v_2 x_{2}^{\ast}) v_2 x_{2}^{\ast} 
    & \mu_{2} w_{22} (1-v_2 x_{2}^{\ast}) v_2 x_{2}^{\ast} - v_2 
    & \mu_{2} w_{23} (1-v_2 x_{2}^{\ast}) v_2 x_{2}^{\ast} \\[6pt]
\mu_{3} w_{31} (1-v_3 x_{3}^{\ast}) v_3 x_{3}^{\ast} 
    & \mu_{3} w_{32} (1-v_3 x_{3}^{\ast}) v_3 x_{3}^{\ast} 
    & \mu_{3} w_{33} (1-v_3 x_{3}^{\ast}) v_3 x_{3}^{\ast} - v_3
\end{matrix}
\right).
\end{equation}
The characteristic equation $\det{\left(A-\lambda E\right)}=0,$ where $E$ is the unity matrix, is then a cubic equation with coefficients expressed explicitly in terms of the critical point.

The characteristic polynomial in the form
\begin{equation}
    P(\lambda)= c_3 \lambda^3+c_2 \lambda^2 + c_1 \lambda + c_0
\end{equation}
is ready. The characteristic numbers for a critical point $(x_{1}^{\ast},x_{2}^{\ast},x_{3}^{\ast})$ are solutions of the equation $P(\lambda)=0.$ We can apply it to several particular cases.

We know that $c_3=-1,$
\[
c_2=-v_1 - v_2 - v_3
+ v_1 w_{11} x_1 \mu_1
- v_1^2 w_{11} x_1^2 \mu_1
+ v_2 w_{22} x_2 \mu_2
- v_2^2 w_{22} x_2^2 \mu_2
+ v_3 w_{33} x_3 \mu_3
- v_3^2 w_{33} x_3^2 \mu_3
\]

\[
\begin{aligned}
c_1=&-v_1 v_2 - v_1 v_3 - v_2 v_3 
+ v_1 v_2 w_{11} x_1 \mu_1 
+ v_1 v_3 w_{11} x_1 \mu_1 
- v_1^2 v_2 w_{11} x_1^2 \mu_1 
- v_1^2 v_3 w_{11} x_1^2 \mu_1 \\
&\quad
+ v_1 v_2 w_{22} x_2 \mu_2 
+ v_2 v_3 w_{22} x_2 \mu_2 
- v_1 v_2^2 w_{22} x_2^2 \mu_2
- v_2^2 v_3 w_{22} x_2^2 \mu_2 \\
&\quad
+ v_1 v_2 w_{12} w_{21} x_1 x_2 \mu_1 \mu_2
- v_1 v_2 w_{11} w_{22} x_1 x_2 \mu_1 \mu_2 \\
&\quad
- v_1^2 v_2 w_{12} w_{21} x_1^2 x_2 \mu_1 \mu_2
+ v_1^2 v_2 w_{11} w_{22} x_1^2 x_2 \mu_1 \mu_2 \\
&\quad
- v_1 v_2^2 w_{12} w_{21} x_1 x_2^2 \mu_1 \mu_2
+ v_1 v_2^2 w_{11} w_{22} x_1 x_2^2 \mu_1 \mu_2 \\
&\quad
+ v_1^2 v_2^2 w_{12} w_{21} x_1^2 x_2^2 \mu_1 \mu_2
- v_1^2 v_2^2 w_{11} w_{22} x_1^2 x_2^2 \mu_1 \mu_2 \\
&\quad
+ v_1 v_3 w_{33} x_3 \mu_3
+ v_2 v_3 w_{33} x_3 \mu_3
- v_1 v_3^2 w_{33} x_3^2 \mu_3
- v_2 v_3^2 w_{33} x_3^2 \mu_3 \\
&\quad
+ v_1 v_3 w_{13} w_{31} x_1 x_3 \mu_1 \mu_3
- v_1 v_3 w_{11} w_{33} x_1 x_3 \mu_1 \mu_3 \\
&\quad
- v_1^2 v_3 w_{13} w_{31} x_1^2 x_3 \mu_1 \mu_3
+ v_1^2 v_3 w_{11} w_{33} x_1^2 x_3 \mu_1 \mu_3 \\
&\quad
- v_1 v_3^2 w_{13} w_{31} x_1 x_3^2 \mu_1 \mu_3
+ v_1 v_3^2 w_{11} w_{33} x_1 x_3^2 \mu_1 \mu_3 \\
&\quad
+ v_1^2 v_3^2 w_{13} w_{31} x_1^2 x_3^2 \mu_1 \mu_3
- v_1^2 v_3^2 w_{11} w_{33} x_1^2 x_3^2 \mu_1 \mu_3 \\
&\quad
+ v_2 v_3 w_{23} w_{32} x_2 x_3 \mu_2 \mu_3
- v_2 v_3 w_{22} w_{33} x_2 x_3 \mu_2 \mu_3 \\
&\quad
- v_2^2 v_3 w_{23} w_{32} x_2^2 x_3 \mu_2 \mu_3
+ v_2^2 v_3 w_{22} w_{33} x_2^2 x_3 \mu_2 \mu_3 \\
&\quad
- v_2 v_3^2 w_{23} w_{32} x_2 x_3^2 \mu_2 \mu_3
+ v_2 v_3^2 w_{22} w_{33} x_2 x_3^2 \mu_2 \mu_3 \\
&\quad
+ v_2^2 v_3^2 w_{23} w_{32} x_2^2 x_3^2 \mu_2 \mu_3
- v_2^2 v_3^2 w_{22} w_{33} x_2^2 x_3^2 \mu_2 \mu_3
\end{aligned}
\]

\[
\begin{aligned}
c_0=&-v_1 v_2 v_3
+ v_1 v_2 v_3 w_{11} x_1 \mu_1
- v_1^2 v_2 v_3 w_{11} x_1^2 \mu_1
+ v_1 v_2 v_3 w_{22} x_2 \mu_2
- v_1 v_2^2 v_3 w_{22} x_2^2 \mu_2 \\
&\quad
+ v_1 v_2 v_3 w_{12} w_{21} x_1 x_2 \mu_1 \mu_2
- v_1 v_2 v_3 w_{11} w_{22} x_1 x_2 \mu_1 \mu_2 \\
&\quad
- v_1^2 v_2 v_3 w_{12} w_{21} x_1^2 x_2 \mu_1 \mu_2
+ v_1^2 v_2 v_3 w_{11} w_{22} x_1^2 x_2 \mu_1 \mu_2 \\
&\quad
- v_1 v_2^2 v_3 w_{12} w_{21} x_1 x_2^2 \mu_1 \mu_2
+ v_1 v_2^2 v_3 w_{11} w_{22} x_1 x_2^2 \mu_1 \mu_2 \\
&\quad
+ v_1^2 v_2^2 v_3 w_{12} w_{21} x_1^2 x_2^2 \mu_1 \mu_2
- v_1^2 v_2^2 v_3 w_{11} w_{22} x_1^2 x_2^2 \mu_1 \mu_2 \\
&\quad
+ v_1 v_2 v_3 w_{33} x_3 \mu_3
- v_1 v_2 v_3^2 w_{33} x_3^2 \mu_3 \\
&\quad
+ v_1 v_2 v_3 w_{13} w_{31} x_1 x_3 \mu_1 \mu_3
- v_1 v_2 v_3 w_{11} w_{33} x_1 x_3 \mu_1 \mu_3 \\
&\quad
- v_1^2 v_2 v_3 w_{13} w_{31} x_1^2 x_3 \mu_1 \mu_3
+ v_1^2 v_2 v_3 w_{11} w_{33} x_1^2 x_3 \mu_1 \mu_3 \\
&\quad
- v_1 v_2 v_3^2 w_{13} w_{31} x_1 x_3^2 \mu_1 \mu_3
+ v_1 v_2 v_3^2 w_{11} w_{33} x_1 x_3^2 \mu_1 \mu_3 \\
&\quad
+ v_1^2 v_2 v_3^2 w_{13} w_{31} x_1^2 x_3^2 \mu_1 \mu_3
- v_1^2 v_2 v_3^2 w_{11} w_{33} x_1^2 x_3^2 \mu_1 \mu_3 \\
&\quad
+ v_1 v_2 v_3 w_{23} w_{32} x_2 x_3 \mu_2 \mu_3
- v_1 v_2 v_3 w_{22} w_{33} x_2 x_3 \mu_2 \mu_3 \\
&\quad
- v_1 v_2^2 v_3 w_{23} w_{32} x_2^2 x_3 \mu_2 \mu_3
+ v_1 v_2^2 v_3 w_{22} w_{33} x_2^2 x_3 \mu_2 \mu_3 \\
&\quad
- v_1 v_2 v_3^2 w_{23} w_{32} x_2 x_3^2 \mu_2 \mu_3
+ v_1 v_2 v_3^2 w_{22} w_{33} x_2 x_3^2 \mu_2 \mu_3 \\
&\quad
+ v_1 v_2^2 v_3^2 w_{23} w_{32} x_2^2 x_3^2 \mu_2 \mu_3
- v_1 v_2^2 v_3^2 w_{22} w_{33} x_2^2 x_3^2 \mu_2 \mu_3 \\
&\quad
- v_1 v_2 v_3 w_{13} w_{22} w_{31} x_1 x_2 x_3 \mu_1 \mu_2 \mu_3
+ v_1 v_2 v_3 w_{12} w_{23} w_{31} x_1 x_2 x_3 \mu_1 \mu_2 \mu_3 \\
&\quad
+ v_1 v_2 v_3 w_{13} w_{21} w_{32} x_1 x_2 x_3 \mu_1 \mu_2 \mu_3
- v_1 v_2 v_3 w_{11} w_{23} w_{32} x_1 x_2 x_3 \mu_1 \mu_2 \mu_3 \\
&\quad
- v_1 v_2 v_3 w_{12} w_{21} w_{33} x_1 x_2 x_3 \mu_1 \mu_2 \mu_3
+ v_1 v_2 v_3 w_{11} w_{22} w_{33} x_1 x_2 x_3 \mu_1 \mu_2 \mu_3
\end{aligned}
\]

\subsection{Cross-activation with one decay parameter}

Consider the general system (\ref{main6}) with the following set of parameters: $\mu_1=\mu_2=\mu_3=4,$ $w_{11}=w_{22}=w_{33}=1,$ $\theta_1=\theta_2=0.5, \; \theta_3=0.5/v_3,$ $v_1=1,$ $v_2=1,$ and $v_3$ is a free parameter.

The system is
\begin{equation} \label{maincrossactivation}
\left\{\begin{array}{rl}
&x'_1 =  \frac{1}{1+e^{-4( x_1  - 0.5)}}- x_1,    \\\\
&x'_2 = \frac{1}{1+e^{-4 (x_2 - 0.5)}} -  x_2, \\\\
&x'_3 = \frac{1}{1+e^{-4 (x_3 -
0.5)}}- v_3 x_3.
\end{array}\right.
\end{equation}
The critical points are solutions of
\begin{equation} \label{maincrossactivationcritpoints}
\left\{\begin{array}{rl}
&x_1 =  \frac{1}{1+e^{-4( x_1  - 0.5)}},    \\\\
&x_2 = \frac{1}{1+e^{-4 (x_2 - 0.5)}}, \\\\
&v_3 x_3 = \frac{1}{1+e^{-4 (x_3 -
0.5)}}.
\end{array}\right.
\end{equation}

Let $v_3=1.$ The nullclines are three orthogonal planes in the phase space, and they intersect at the central location $(0.5,0.5,0.5)$. If the coefficient $v_3$ is  varied, the  only critical point at $(0.5,0.5,1/(2 v_3) )$ changes its location. The type of this point is defined by zeros of the characteristic polynomial
\begin{equation} \label{main-char-pol1}
P_1(\lambda)=-\lambda^3 + (1-v_3) \lambda^2.
\end{equation}
Therefore, the characteristic numbers are $\lambda_1=\lambda_2=0,$ $\lambda_3=1-v_3.$ This means that the critical point $(0.5,0.5, 0.5/ v_3)$ is not stable for $0<v_3<1$ and is non-hyperbolic for $v_3>1.$ In the second case additional analysis is needed. 

\subsection{Transformation of a limit cycle}

Let the 3D system be
\begin{equation} \label{eq:3Dsystem}
\left\{
\begin{aligned}
\dot{x} &= 
\frac{1}{1 + e^{-4(k x -  z - \theta_1)}} 
- x, \\
\dot{y} &= 
\frac{1}{1 + e^{-4(-x + k y  - \theta_2)}} 
- y, \\
\dot{z} &= 
\frac{1}{1 + e^{-4 (-y + k z - \theta_3)}} 
- z,
\end{aligned}
\right.
\end{equation}
where $k$ is a parameter. For $k=1$ and the appropriate choice of $\theta_i$ ($\theta_i=0$ for $i=1,2,3$) this system has an attractive limit cycle depicted in Fig. \ref{fig:lc}.
\begin{figure}[h!] 
    \centering    \includegraphics[width=0.4\textwidth]{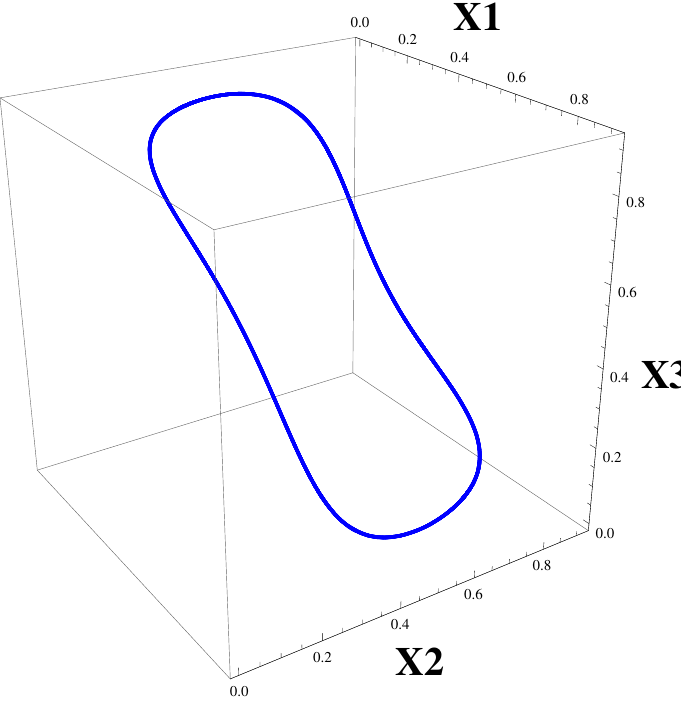}
    \caption{The limit cycle for $v_3=1.$}
    \label{fig:lc}
\end{figure}
What happens if the coefficient $v_3$ changes?
Let $v_3$ be a free parameter, $v_1=v_2=1$. We will change $v_3$ and find the characteristic numbers for the central critical point at $(0.5, 0.5, 0.5/v_3)$. To keep this critical point at the central location we have to set $\theta_1=0.5-0.5/v_3, \; \theta_2=0, \; \theta_3=-0.5+0.5/v_3.$ 

The coefficient matrix $A2$ of the linearized at the central critical point $(0.5, 0.5, 0.5/v_3)$ system is
\begin{equation} \label{A2}
A2=\left(
\begin{matrix}
-1
    & 0 
    & -1 \\[6pt]
-1
    & 0
    & 0 \\[6pt]
0
    & -1
    & -1 - v_3
\end{matrix}
\right).
\end{equation}
The characteristic polynomial $\det[A2-\lambda E]$ is
\begin{equation} \label{char-pol2}
P(\lambda)=-\lambda^3+(1-v_3) \lambda^2 -1
\end{equation}

The characteristics of a critical point are saved in Table \ref{table2}.

\begin{table}[h!] 
\centering
\caption{}
\label{table2}

\begin{tabular}{|c|c|c|}
\hline
$v_3$ & Point & Characteristic numbers ($\lambda_1, \lambda_2, \lambda_3$) \\ \hline
$0.6$    & $0.5, 0.5, 0.833$  & $(-0.882, 0.641 \pm 0.849 \mathrm{i})$  \\ \hline
$1.0$    & ${0.5, 0.5, 0.5}$    & $(-1, 0.5\pm 0.866 \mathrm{i})$  \\ \hline
$1.5$     & ${0.5, 0.5, 0.333}$    & $(-1.197, 0.348 \pm 0.844\mathrm{i})$  \\ \hline
$2.5$     & ${0.5, 0.5, 0.2}$    & $(-1.806, 0.153 \pm 0.728 \mathrm{i})$    \\ \hline
$5.5$    & ${0.5, 0.5, 0.091}$    & $(-4.548, 0.024 \pm 0.468 \mathrm{i})$    \\ \hline
$9.5$    & ${0.5, 0.5, 0.052}$    & $(-8.513, 0.006 \pm 0.342 \mathrm{i})$   \\ \hline
\end{tabular}

\vspace{0.5em}
{\normalsize Characteristics of the central critical point.}
\end{table}

The central critical points mentioned in Table \ref{table2} are not attractive. Despite of the dominance of the real root over the real parts of complex roots chaotic behavior of solutions was not observed (see subsection 3.5).

\subsection{All $v_i$ are free parameters}

Let the regulatory matrix be as before
\begin{equation} \label{W3}
W3=\left(
\begin{matrix}
0
    & 0 
    & -1 \\[6pt]
-1
    & 0
    & 0 \\[6pt]
0
    & -1
    & 1
\end{matrix}
\right).
\end{equation}
Suppose $\mu_i=4$, $i=1,2,3;$\\
\begin{equation} \label{thetas}
\begin{aligned}
\theta_1 &= w_{11}\frac{0.5}{v_1} + w_{12}\frac{0.5}{v_2} + w_{13}\frac{0.5}{v_3},\\
\theta_2 &= w_{21}\frac{0.5}{v_1} + w_{22}\frac{0.5}{v_2} + w_{23}\frac{0.5}{v_3},\\
\theta_3 &= w_{31}\frac{0.5}{v_1} + w_{32}\frac{0.5}{v_2} + w_{33}\frac{0.5}{v_3}.
\end{aligned}
\end{equation}
where $v_i$ are now free parameters. Such choice of $\theta_i$ will ensure that the central point $(0.5/v_1, 0.5/v_2, 0.5/v_3)$ of the invariant set $Q_3$ is the critical point (Proposition 1). We wish to know how the choice of $v_i$ will affect the type of this point. For this, construct the characteristic polynomial, which depends on all decay parameters $v_i.$

\begin{equation} \label{A3}
A3=\left(
\begin{matrix}
-v_1 
    & 0 
    & -1 \\[6pt]
-1
    & -v_2
    & 0 \\[6pt]
0
    & -1
    & 1 - v_3
\end{matrix}
\right).
\end{equation}
The characteristic equation $\det[A3 - \lambda E]=0$ is 
\begin{equation} \label{charpol3}
- \lambda^3 +P \lambda^2 + Q \lambda + R=0, 
\end{equation}
where\\
$P=1- v_1 - v_2 - 
    v_3$\\
    $Q= v_1 + v_2 - v_1 v_2 - v_1 v_3 - v_2 v_3$\\
    $R=-1 + v_1 v_2 - v_1 v_2 v_3.$\\
For some positive triples $(v_1, v_2, v_3)$ the characteristic equation for the central critical point has real roots of different signs. The corresponding point is non-attractive and one can expect the existence of attractor of more complicated nature. Consider the example.  

\vspace{1em}
{\bf Example.} Set $(v_1,v_2, v_3)$ to $(1.7, 2, 0.7).$ The characteristic equation is
\begin{equation} \label{charpol4}
-\lambda^3 - 3.40\lambda^2 - 
  2.29 \lambda + 0.02=0
\end{equation}
with roots $(\lambda_1=-2.469 \; \lambda_2=-0.939, \; \lambda_3=0.008).$ The central critical point in the region $Q_3=\{0<x<0.58, \; 0<y<0.5,$ $ \; 0<z<1.43\}$ is a 3D saddle.

\begin{figure}[h!] 
    \centering    \includegraphics[width=0.35\textwidth]{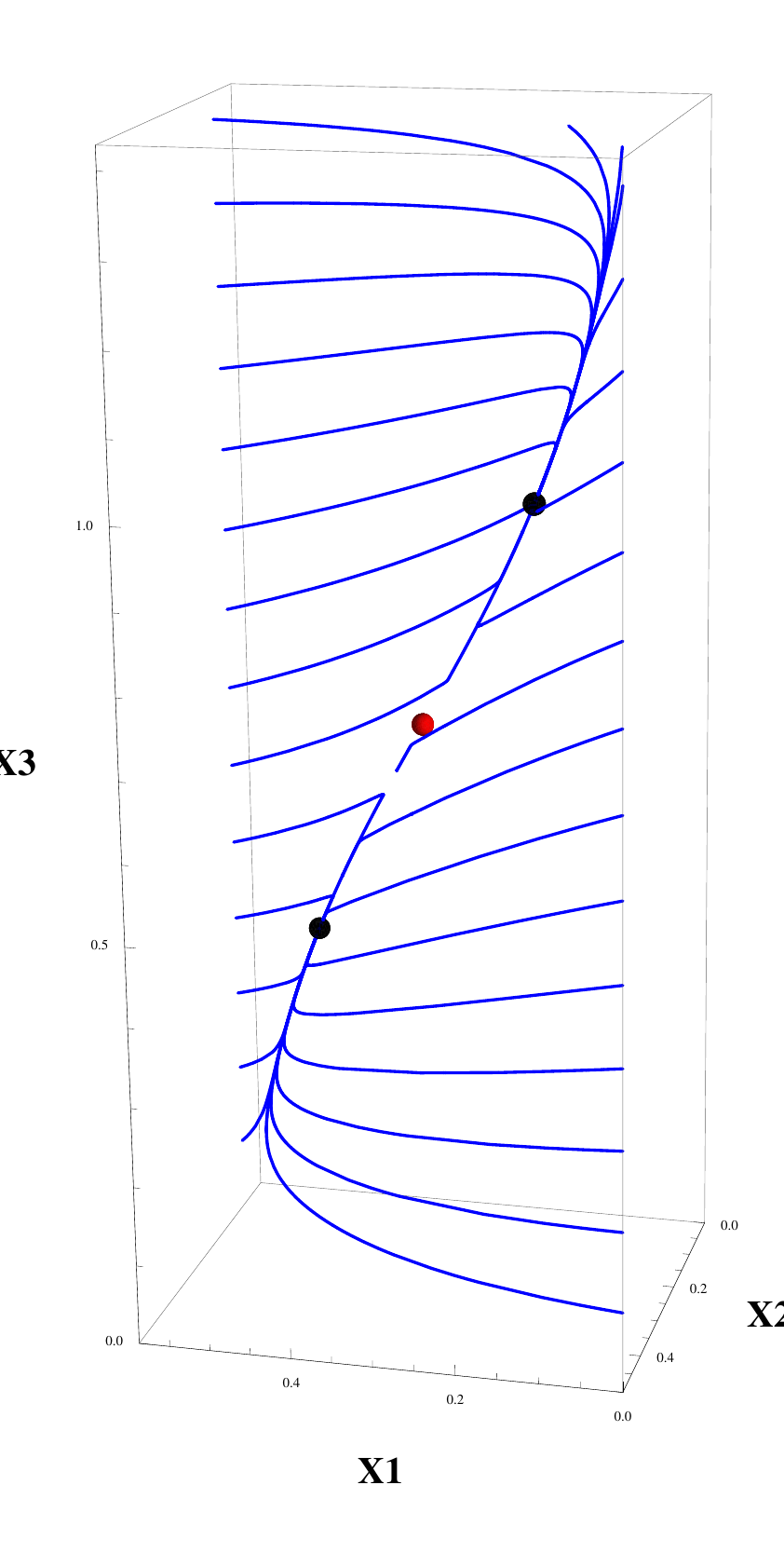}
    \caption{The 30 trajectories emanating from $(0.58, 0.1, 0.1 + k)$ and $(0.0, 0.5, 0.1 + k)$, $k=0, 0.1, \dots, 1.4.$ The central point is marked red. Two symmetric critical points that attracts trajectoriesbare marked in orange.}
    \label{fig:30}
\end{figure}

There are two symmetric attractive points at $(0.4486 0.1751, 0.4223)$ and $0.1395, 0.3249, 1.006).$ They are marked in Figure \ref{fig:30}.

\subsection{Influence on the number of equilibria}

Change of coefficients $v_i$ may influence the number of critical points (equilibria). The set of critical points for GRN type systems is never empty.
Their number may vary from one and up, but is finite. The number of critical points for the two-dimensional system can be at most nine. The two nullclines (recall that critical points are the intersection points of nullclines) then are two Z-shaped curves.

Consider a simple example of how $v_i$ coefficients affect the number of equilibria. Let $\mu_i=\mu=4,$ $\theta_i=1$ for $i=1,2,3.$ Consider the case of uncoupled system, where the regulatory matrix is
\begin{equation} \label{A4}
A4=\left(
\begin{matrix}
1
    & 0 
    & 0 \\[6pt]
0
    & 1
    & 0 \\[6pt]
0
    & 0
    & 1
\end{matrix}
\right).
\end{equation}
Suppose $v_1=v_2=v_3=v.$ So all parameters are set, except $v,$ which takes values $0.1, 0.3, 0.5, 0.7.$ 
The critical points $(z,z,z)$ satisfy the relation
\[
v z= \frac{1}{1+exp\{-4 (z-1)\}}.
\]
The explanatory Figure \ref{fig:vis}
\begin{figure}[h!] 
    \centering    \includegraphics[width=0.4\textwidth]{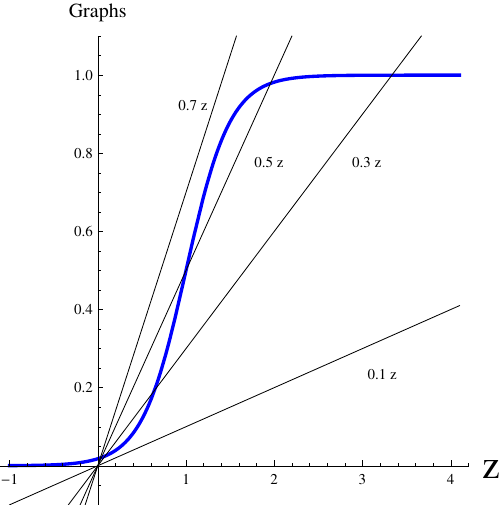}
    \caption{Several solutions of the system that avoid going to the central critical point (red ball).}
    \label{fig:vis}
\end{figure}
illustrates four cases, when there are three equilibria vor $v=0.3$ and $v=0.5,$ one equilibrium for $v=0.7$ (close to the origin), and one equilibrium for $v=0.1$ (not seen in the plot because the corresponding straight line meets the sigmoid at $z \approx 10$.)

 \subsection{Particular cases and Lyapunov curves tests}

 \subsubsection{Existing example of chaos.} Consider example taken from \cite{Kozlovska}. Let the matrix $W$ be
\begin{equation} \label{W4}
W4=\left(
\begin{matrix}
0
    & 0 
    & 2 \\[6pt]
-0.82
    & -0.2
    & 4.55 \\[6pt]
0.1
    & -0.87
    & 1.11
\end{matrix}
\right).
\end{equation}
The set of the remaining  parameters:\\
$v_1=0.164, \: v_2=0.1, \: v_3=0.2,$\\
$\mu_1=4.38, \: \mu_2=4, \: \mu_3=3.1,$\\
$\theta_1=0.77, \: \theta_2=1.09, \: \theta_3=0.62.$\\
For this set of parameters, the critical point was detected at $P=(6.096, 2.085, 1.354).$ Chaotic behavior of solutions was observed. For the initial conditions $(x_0, y_0, z_0)=(0.41, 0.1, 0.4)$ the test on Lyapunov exponents was performed. The result is depicted in Figure \ref{fig:lcurves6}, where the Lyapunov curves and the Kaplan-Yorke degree indicate the tendency to chaotic behavior of solutions. 
\begin{figure}[h!] 
    \centering    \includegraphics[width=0.4\textwidth]{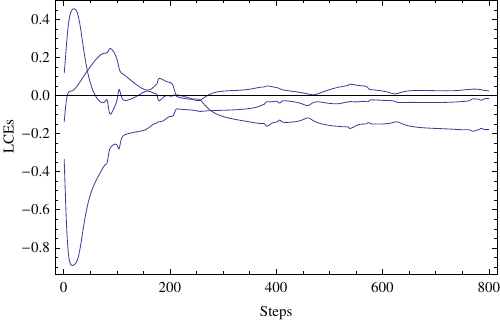}
    \caption{Lyapunov curves for the above set of parameters.}
    \label{fig:lcurves6}
\end{figure}

Let us change only $v_2$ slightly. Let $v_2=0.3.$ 

\begin{figure}[h!] 
    \centering    \includegraphics[width=0.4\textwidth]{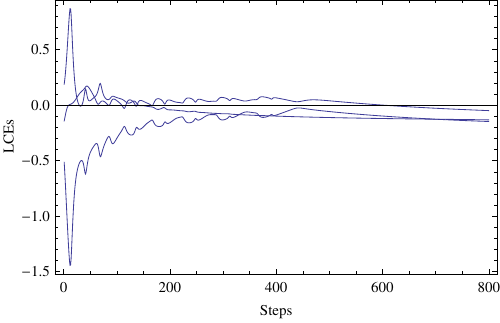}
    \caption{Lyapunov curves for the above set of parameters, where $v_2=0.3$.}
    \label{fig:lcurves5a}
\end{figure}
Lyapunov curves tend to go down (Figure \ref{fig:lcurves5a}), leaving the positive region. Lyapunov exponents are all negative $(-0.0494882, -0.132788, -0.147422)$, Kaplan-Yorke topological dimension is not provided.

\vskip0.3cm
\subsubsection{Changing one parameter only.} Consider an example of the system described by  the data (\ref{W3}), (\ref{thetas}), $\mu_i=4.$  The matrix of coefficients of the linearized system around the central critical point $(1/(2 v_1),1/(2 v_2),1/(2 v_3))$ is

\begin{equation} \label{A5}
A5=\left(
\begin{matrix}
-v_1
    & 0 
    & -1 \\[6pt]
-1
    & -v_2
    & 0 \\[6pt]
0
    & -1
    & 1-v_3
\end{matrix}
\right).
\end{equation}

We will change the decay parameters $(v_1, v_2, v_3)$ and show the characteristic numbers for the corresponding central critical points and Lyapunov curves. Making test on Lyapunov curves we assume that the initial condition for the testing trajectory is $(0.4/v_1, 0.1/v_2, 0.4/v_3),$ so the initial conditions change proportionally with the parameter.

Case 1. Let $(v_1,v_2,v_3)$ be $(0.5,0.5,0.5).$ The characteristic numbers are $\lambda_1=-1.254,$ $\lambda_{2,3}=0.377 \pm 0.744 \,\mathbf{i}.$ The Lyapunov curves are shown in Figure \ref{fig:lcurves8a}
\begin{figure}[h!] 
    \centering    \includegraphics[width=0.4\textwidth]{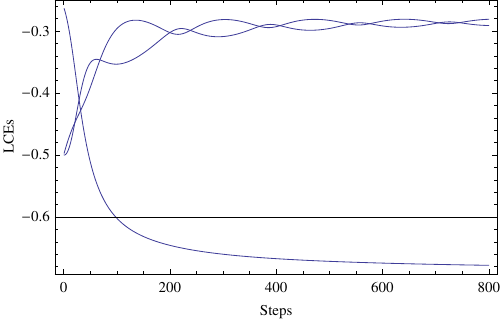}
    \caption{Lyapunov curves for Case 1.}
    \label{fig:lcurves8a}
\end{figure}

Case 2. Let $(v_1,v_2,v_3)$ be $(1, 1, 0.6).$ The characteristic numbers are $\lambda_1=-1.691,$ $\lambda_{2,3}=0.045 \pm 0.593 \,\mathbf{i}.$ The Lyapunov curves are shown in Figure \ref{fig:lcurves9a}
\begin{figure}[h!] 
    \centering    \includegraphics[width=0.4\textwidth]{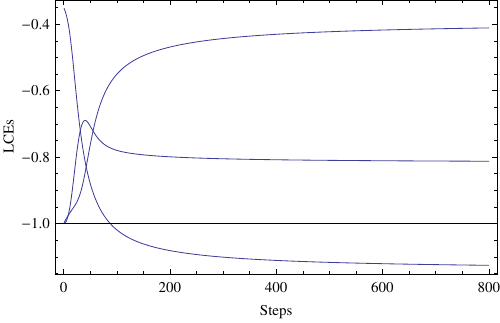}
    \caption{Lyapunov curves for Case 2.}
    \label{fig:lcurves9a}
\end{figure}

Case 3. Let $(v_1,v_2,v_3)$ be $(1, 1, 1.5).$ The characteristic numbers are $\lambda_1=-1.858,$ $\lambda_{2,3}=0.320 \pm 0.839 \,\mathbf{i}.$ The Lyapunov curves are shown in Figure \ref{fig:lcurves9a}
\begin{figure}[h!] 
    \centering    \includegraphics[width=0.4\textwidth]{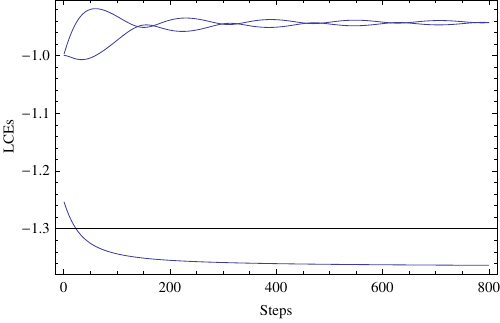}
    \caption{Lyapunov curves for Case 3.}
    \label{fig:lcurves10a}
\end{figure}

Case 4. Let $(v_1,v_2,v_3)$ be $(1, 1, 0.1).$ The characteristic numbers are $\lambda_1=-1.628,$ $\lambda_2=0.172,$ $\lambda_3=0.356.$ The Lyapunov curves are shown in Figure \ref{fig:lcurves11a}
\begin{figure}[h!] 
    \centering    \includegraphics[width=0.4\textwidth]{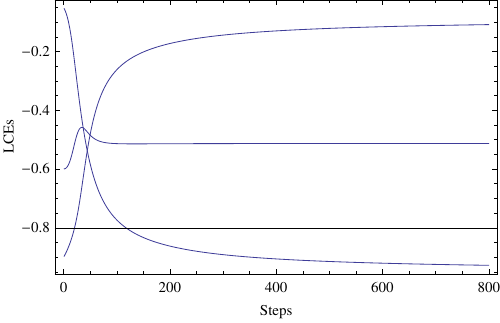}
    \caption{Lyapunov curves for Case 4.}
    \label{fig:lcurves11a}
\end{figure}
 
Changing $v_3$ only and keeping $v_1=v_2=1,$ one observes: if $v_3$ is near zero, there are three real characteristic numbers of different signs. The central critical point is non-attractive. When increasing $v_3$ to the value $0.6$ thereal characteristic number is negative, and two complex conjugate roots have positive real part. The central critical point is non-attractive. Further increasing $v_3$ up to the value $1.5$ leads the substantial change. The central critical point becomes fully attractive; both the real root and the real part of the complex roots are negative. Both cases 2 and 3 yield non-attraction at the center. So another attractor should exist. Both attractors are visualized in the below two figures.

\begin{figure}[h!] \label{last}
\centering
\begin{subfigure}{0.37\textwidth}
    \centering
    \includegraphics[width=\linewidth]{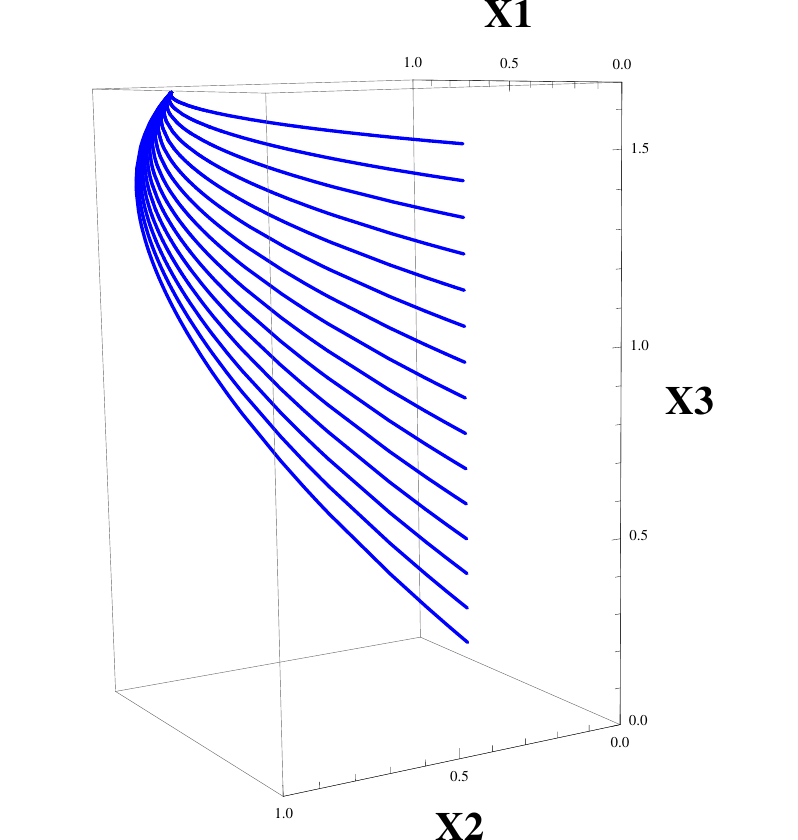}
    \caption{Case 2. Trajectories go the attractor at $(0.504, 1, 1.662).$}
\end{subfigure}
\hfill
\begin{subfigure}{0.35\textwidth}
    \centering
    \includegraphics[width=\linewidth]{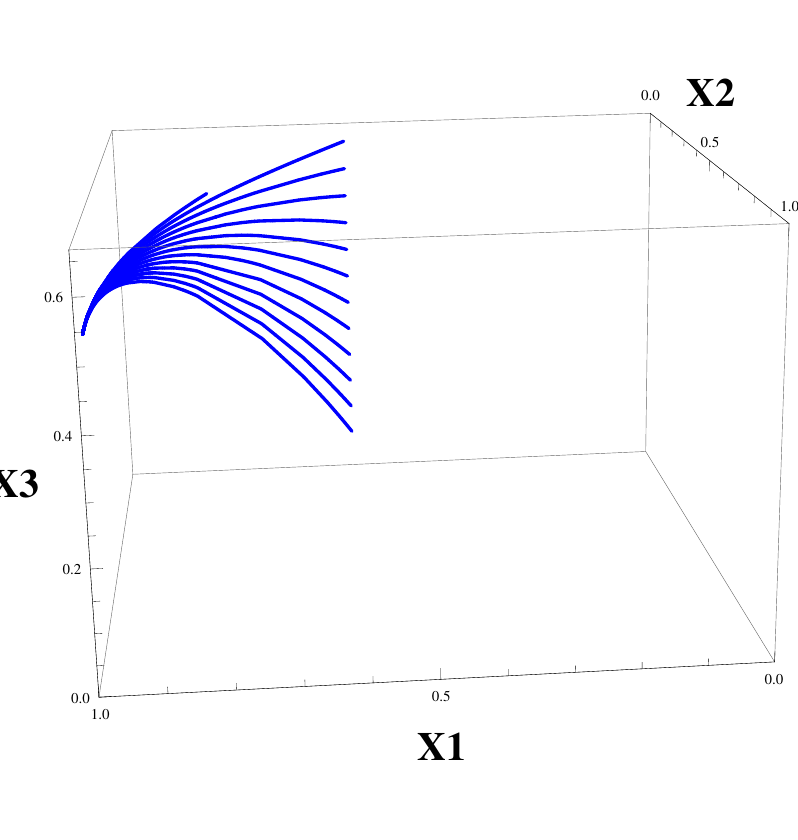}
    \caption{Case 3. Trajectories go to the attractor (critical point) at $(0.99, 1, 0.54).$}
\end{subfigure}
\end{figure}

In calculations of this subsection the program by M. Sandri \cite{Sandri} was used.

\section{Conclusions}
We have considered GRN type systems that are of great importance in mathematical modeling of real genetic networks. These systems allow tracing the evolution of solutions and predicting the behavior of elements of a network over finite intervals. This system consists of two parts: the nonlinear part, represented by a sigmoidal function, and the linear part. Dominance of the linear part simplifies the behavior of solutions. In the absence of nonlinearity, all solutions exponentially go to zero. When balance between both parts of a system is established, the phase space becomes more complicated, revealing nonlinear effects, such as coexistence of different attractors and a sequence of bifurcations under the change of parameters. We focused on coefficients in the linear part. Their role is substantial. The analysis made in Section 2 demonstrates the technique of working with three-dimensional systems, up to constructing the system with desired properties. In further sections, we discuss critical points, their interaction with decay coefficients, and locating a critical point at the center of the invariant set. Taking as a basis the 3D system that has an attractor in the form of a stable limit cycle, we show how the behavior of solutions may change under the change of decay coefficients. A set of formulas for making the local analysis around the critical point was obtained. Using these formulas to study some particular cases, we show how properties of a critical point can be affected.
The number of critical points itself can be changed by the appropriate change of decay coefficients. The chaotic behavior is not easy to detect in GRN systems. We examined an example in the literature \cite{Kozlovska} showing how coefficients $v_i$ influence that. Examples in Section 2, where dominance of the real characteristic numbers is established over the real parts of the remaining complex roots, can be further studied to find more chaotic attractors.

\end{document}